\title{Strong Hardness of Approximation for Tree Transversals}
\author{Euiwoong Lee \\ University of Michigan \and Pengxiang Wang \\ University of Michigan}
\date{}
\newtheorem{theorem}{Theorem}
\newtheorem{claim}{Claim}
\newcommand{\eps}{\varepsilon}
\begin{document}

\maketitle

\begin{abstract}
Let $H$ be a fixed graph. The $H$-Transversal problem, given a graph $G$, asks to remove the smallest number of vertices from $G$ so that $G$ does not contain $H$ as a subgraph. While a simple $|V(H)|$-approximation algorithm exists and is believed to be tight for every $2$-vertex-connected $H$, the best hardness of approximation for any tree was $\Omega(\log |V(H)|)$-inapproximability when $H$ is a star. 

In this paper, we identify a natural parameter $\Delta$ for every tree $T$ and show that $T$-Transversal is NP-hard to approximate within a factor $(\Delta - 1 -\eps)$ for an arbitrarily small constant $\eps > 0$. As a corollary, we prove that there exists a tree $T$ such that $T$-Transversal is NP-hard to approximate within a factor $\Omega(|V(T)|)$, exponentially improving the best known hardness of approximation for tree transversals.
\end{abstract}

\section{Introduction}
Let $H = (V(H), E(H))$ be a fixed {\em pattern graph}. The $H$-Transversal problem is a vertex deletion problem whose input is a graph $G = (V(G), E(G))$ and the goal is to compute the smallest set $S \subseteq V(G)$ such that $G \setminus S$ does not have $H$ as a subgraph. 
Note that in this paper, we focus on the notion of {\em subgraphs} instead of {\em induced subgraphs} and {\em (topological) minors}, both of which have been actively studied through the lens of approximation and parameterized algorithms. We refer the reader to recent papers~\cite{ahn2020towards, fomin2020hitting, kim_et_al:LIPIcs.APPROX/RANDOM.2021.7} and a survey~\cite{feldmann2020survey} on these topics. 

$H$-Transversal either captures or is closely related to fundamental optimization problems including Vertex Cover, Dominating Set, Feedback Vertex Set, and Clique Transversal; see~\cite{guruswami2017inapproximability} and references therein. One natural direction is to characterize the complexity and approximability of $H$-Transversal for every $H$. Lund and Yannakakis~\cite{lund1993approximation} gave the complexity classification, proving that whenever $H$ has an edge, $H$-Transversal becomes NP-hard to solve optimally and in fact APX-hard. However, a complete characterization of approximability for $H$-Transversal is not known yet.

When $H$ is a single edge, $H$-Transversal becomes Vertex Cover that has a simple $2$-approximation algorithm,  which is optimal assuming the Unique Games Conjecture (UGC)~\cite{khot2008vertex}. Indeed, for every $H$, there is a simple $|V(H)|$-approximation algorithm for $H$-Transversal by viewing the problem as a special case of $|V(H)|$-Uniform-Hypergraph Vertex Cover; given $G$, consider a hypergraph $H'$ whose vertex set is $V(G)$ and a set of $|V(H)|$ vertices $\{ v_1, \dots, v_{|V(H)|} \}$ forms a hyperedge if and only if the subgraph induced by them has $H$ as a subgraph. Then $S \subseteq V(G)$ is a $H$-transversal in $G$ if and only if it covers every hyperedge of $H'$, so a $|V(H)|$-approximation algorithm for Hypergraph Vertex Cover for $H'$ implies the same approximation factor for $H$-Transversal.

When $H$ is $2$-vertex-connected, it is know that that this simple approximation algorithm is likely to be tight; assuming the UGC, for any constant $\eps > 0$, it is NP-hard to approximate $H$-Transversal within a factor $(|V(H)| - \eps)$~\cite{brustle2021approximation}. (Without the UGC, the factor becomes $(|V(H)| - 1 - \eps)$~\cite{guruswami2017inapproximability}.)

Given the strong hardness of any $2$-vertex-connected $H$, it is natural to study the case when $H$ is a tree. For trees, most known results are algorithmic. When $H$ is a path or a star (i.e., a tree where every vertex except one is a leaf), there exists an $O(\log |V(H)|)$-approximation algorithm~\cite{lee2017partitioning, guruswami2017inapproximability}. Very recently, it was proved that there exists a $(|V(H)|-1/2)$-approximation algorithm for {\em every} tree $H$~\cite{brustle2021approximation}, showing qualitative differences between trees and $2$-vertex-connected graphs. Prior to this work, the largest inapproximability factor for any tree $H$ is $\Omega(\log |V(H)|)$ when $H$ is a star. Given stars and paths are two {\em extreme examples} of trees (e.g., among trees, stars have the smallest diameter and paths have the largest) and they both admit $O(\log |V(H)|)$-approximations, it is natural to suspect that every tree $H$ admits an $O(\log |V(H)|)$-approximation algorithm.

In this paper, we prove that 
surprisingly (at least to the authors),
it is not the case and there exists a tree $T$ such that $T$-Transversal is NP-hard to approximate within a factor $\Omega(|V(T)|)$. 
Given a tree $T$, let $\chi : V(T) \to \{ 0, 1 \}$ to be a proper $2$-coloring of $T$, and 
\[
\Delta(T) := \min_{i \in \{ 0, 1\}} \max_{v \in \chi^{-1}(i)} \deg_T(v).\]
Note that as the $2$-coloring of any tree is unique up to switching two colors, so $\Delta(T)$ does not depend on the choice of $\chi$. Our main theorem is the following hardness for $T$-Transversal. 
\begin{theorem}
Let $T$ be a fixed tree with $\Delta(T) \geq 3$. For any constant $\eps > 0$, it is NP-hard to approximate $T$-Transversal within a factor of $(\Delta(T) - 1 - \eps)$. 
\label{thm:main}
\end{theorem}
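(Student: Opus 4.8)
\medskip
\noindent\textbf{Proof strategy.}
The plan is to give an approximation-preserving reduction from minimum vertex cover on $\Delta$-uniform hypergraphs, where $\Delta := \Delta(T) \ge 3$; this problem is NP-hard to approximate within $\Delta - 1 - \delta$ for every constant $\delta > 0$, and on the hard instances the optimum has value $\Theta(n)$, $n$ being the number of vertices. (Such hardness is known precisely for $\Delta \ge 3$, which is exactly why Theorem~\ref{thm:main} assumes $\Delta(T) \ge 3$.) From a $\Delta$-uniform hypergraph $\mathcal H = (U,F)$, which we may take to be $\Delta$-partite (or replace by the layered PCP instance underlying this hardness), we build in polynomial time --- with all hidden constants depending only on the fixed tree $T$ --- a graph $G$ with $\mathrm{OPT}(\mathcal H) \le \mathrm{OPT}_T(G) \le \mathrm{OPT}(\mathcal H) + O(1)$, where $\mathrm{OPT}_T(G)$ is the minimum size of a $T$-transversal of $G$. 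Since $\mathrm{OPT}(\mathcal H) = \Theta(|U|) \to \infty$, this is a $(1+o(1))$-approximation-preserving reduction, so the $(\Delta-1-\delta)$-hardness transfers to $T$-Transversal for every constant $\eps > 0$ (take $\delta < \eps$), proving the theorem.

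For the construction: fix a proper $2$-coloring $\chi$ of $T$, let $A$ be a color class attaining the minimum in the definition of $\Delta(T)$ (so $\deg_T(v)\le\Delta$ for $v\in A$), pick $r\in A$ with $\deg_T(r)=\Delta$, and root $T$ at $r$. Its $\Delta$ children $c_1,\dots,c_\Delta$ lie in the other class $B$, and $T$ equals $r$ together with the rooted subtrees $T_1,\dots,T_\Delta$ hanging below $c_1,\dots,c_\Delta$; these $\Delta$ branches will model the $\Delta$ slots of a hyperedge. For each hyperedge $e=\{u_1,\dots,u_\Delta\}\in F$ (with $u_j$ in the $j$-th part, so the matching of elements to branches is fixed; in the non-partite case take all $\Delta!$ matchings) we add a ``hyperedge gadget'': a fresh vertex playing $r$, joined to $u_1,\dots,u_\Delta$ --- which then play $c_1,\dots,c_\Delta$ --- with fresh private vertices realizing the subtrees $T_1,\dots,T_\Delta$ below them. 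The element vertices $U$ are shared across gadgets; everything else is private to one gadget. The ``intended'' copies of $T$ in $G$ are exactly these gadgets, and the one on $e$ survives in $G\setminus S$ iff $S$ misses all of $e$ and all of that gadget's private vertices.

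The two inequalities above follow from a single structural claim, stated at the end: \emph{every copy of $T$ in $G$ contains all $\Delta$ elements of some hyperedge of $\mathcal H$, and a copy using a private vertex of a gadget on $e$ may take this hyperedge to be $e$.} For the upper bound, if $C\subseteq U$ is a minimum vertex cover of $\mathcal H$ then $C$ meets every copy of $T$ (it contains an element of the hyperedge the claim provides), so $C$ --- plus at most $O(1)$ more vertices --- is a $T$-transversal. For the lower bound, take a $T$-transversal $S$ and suppose $C:=S\cap U$ misses some $e\in F$; then each of $e$'s (disjoint) gadgets is destroyed by a private deletion of $S$, and replacing, for every uncovered $e$, all such private deletions by a single element of $e$ yields $S'$ with $|S'|\le|S|$: indeed $S'\cap U$ now covers every hyperedge, so by the claim $S'$ still hits every copy of $T$. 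Hence $\mathrm{OPT}_T(G)\ge\mathrm{OPT}(\mathcal H)$, and combining with the upper bound proves Theorem~\ref{thm:main}. For the corollary, apply it to the double star $T=S_{\Delta-1,\Delta-1}$ (two adjacent centers, each with $\Delta-1$ leaves): its unique $2$-coloring has maximum degree $\Delta$ in both classes, so $\Delta(T)=\Delta$, while $|V(T)|=2\Delta$; hence $S_{\Delta-1,\Delta-1}$-Transversal is NP-hard to approximate within $\Delta-1-\eps=\Omega(|V(T)|)$, an exponential improvement over the $\Omega(\log|V(T)|)$ bound for stars.

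The main obstacle is the structural claim. A priori it could fail: the shared element vertices (and, if gadgets overlap heavily, other vertices) acquire large degree, so $T$ might embed into $G$ by threading through several gadgets in an unintended way, possibly using only a few elements of each hyperedge it visits. The leverage is exactly the parameter $\Delta(T)$: in any embedding of $T$, a vertex sent into the class $A$ has image-degree $\le\Delta$, so every large-degree vertex of $G$ must be the image of a $B$-vertex; since $B$ is independent in $T$, such vertices are pairwise non-adjacent in the copy and are separated by low-degree ($A$-)vertices. If the private parts of the gadgets are designed so that no vertex other than the intended ones can carry an unintended high-degree $B$-image, this rigidifies the embedding down to the intended shape (up to reuse of private vertices), from which the hyperedge-hitting property follows. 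Carrying out this case analysis --- together with the routine but essential check that a transversal never gains by spending deletions on private vertices instead of shared ones (replicating each gadget a constant number of times if needed for this) --- is the technical heart; the surrounding reduction is bookkeeping.
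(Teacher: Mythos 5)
Your high-level template (reduce from $\Delta$-uniform Hypergraph Vertex Cover with the known $(\Delta-1-\delta)$ gap) matches the paper's starting point, but your construction and the claim it rests on do not work, and the issue is exactly the one you defer as the ``technical heart.'' The structural claim --- every copy of $T$ in $G$ contains all $\Delta$ elements of some hyperedge --- is false for the gadget graph you describe. Concretely, take $\Delta=3$ and let $T$ have vertices $y_1,b_1,b_2,b_3,y_2,y_3,y_4,b_4,b_5$ with edges $y_1b_1,y_1b_2,y_1b_3$, $b_1y_2,b_1y_3,b_1y_4$, $y_2b_4,y_2b_5$; then the $y$-class has maximum degree $3$, the $b$-class has maximum degree $4$, so $\Delta(T)=3$ and your class $A$ is forced to be the $y$-class, rooted at $y_1$ (or symmetrically $y_2$). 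Your gadget for a hyperedge $e$ with first element $w$ consists of a root $r_e$ joined to the elements of $e$, and below $w$ (playing $b_1$) private vertices $\tilde y_2,\tilde y_3,\tilde y_4$ with $\tilde b_4,\tilde b_5$ below $\tilde y_2$. If $w$ lies in two hyperedges $e_1,e_2$ (in the same slot), map $y_1\mapsto\tilde y_2$ of the $e_1$-gadget, $b_1\mapsto w$, $b_2,b_3\mapsto\tilde b_4,\tilde b_5$ of $e_1$, $y_3,y_4\mapsto\tilde y_3,\tilde y_4$ of $e_1$, $y_2\mapsto\tilde y_2$ of the $e_2$-gadget, and $b_4,b_5\mapsto\tilde b_4,\tilde b_5$ of $e_2$. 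This is an injective homomorphism of $T$ into $G$ that uses only the single element $w$ and otherwise private vertices of two gadgets, so it contains no hyperedge in full. Consequently a vertex cover $C$ of $\mathcal H$ with $w\notin C$ is \emph{not} (close to) a $T$-transversal, and since one such copy exists for essentially every uncovered element, this cannot be repaired by $O(1)$ extra deletions; the upper bound $\mathrm{OPT}_T(G)\le\mathrm{OPT}(\mathcal H)+O(1)$, and hence the whole reduction, collapses. The culprit is precisely what you anticipated: private class-$A$ vertices of degree $\Delta$ inside a gadget can carry the degree-$\Delta$ $A$-image, letting $T$ thread through a shared element into the private parts of several gadgets; you give no design of the private parts that excludes this, and the plain ``hang a copy of $T$ on each hyperedge'' gadget does not.

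The paper sidesteps this entirely by \emph{not} planting copies of $T$. It blows up the hypergraph (each vertex into a cloud of $B$ copies, each hyperedge into all $B^{\Delta}$ combinations, each duplicated $C$ times) and takes $G$ to be the bipartite vertex--hyperedge incidence graph. Completeness is then a two-line degree argument: after deleting the blown-up cover, every hyperedge-node has degree at most $\Delta-1$, and since \emph{both} color classes of $T$ contain a vertex of degree at least $\Delta$, no copy of $T$ can exist in the remaining bipartite graph --- no analysis of unintended embeddings is needed. Soundness uses the $C$-fold duplication to argue an optimal transversal avoids hyperedge-nodes, and the clouds to show that any hyperedge whose clouds are mostly undeleted would leave behind a large tree containing $T$, so the occupied vertices form a near-cover of $\mathcal H$. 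If you want to salvage a gadget-based reduction in your style, you would need to redesign the private parts (and prove a correct rigidity statement); as written, the proposal has a genuine gap at its central claim.
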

In particular, if $T$ is a {\em double star} (i.e., $V(T) = \{ u_1, \dots, u_{k}, v_1, \dots, v_k \}$ and $E(T) = \{ (u_1, v_1) \} \cup (\cup_{i=2}^k \{ (u_1, u_i), (v_1, v_i) \})$ for some integer $k$), then it is hard to approximte $T$-Transversal within a factor $(|V(T)|/2 - 1 - \eps)$ for any $\eps > 0$. 

\subsection{Techniques}
Like the previous strong inapproximability result for $2$-vertex-connected $H$~\cite{guruswami2017inapproximability}, Theorem~\ref{thm:main} starts from the strong hardness of approximation for $k$-Uniform-Hypergraph Vertex Cover ($k$-HVC). The input is a $k$-uniform hypergraph $H = (V(H), E(H))$ where each hyperedge $e \in E(H)$ contains exactly $k$ vertices, and the goal is to choose the smallest subset $S \subseteq V(H)$ that covers (intersects) every hyperedge $e \in E(H)$. \cite{dinur2005new} proved that it is NP-hard to approximate this problem within a factor $(k - 1 - \eps)$ for any $\eps > 0$. 

Let $T$ be a fixed $2$-vertex-connected graph with $k = |V(T)|$. \cite{guruswami2017inapproximability} constructs a reduction from $k$-HVC to $T$-Transversal by {\em directly replacing each hyperedge with a copy of $T$.} Given a hypergraph $H$ for $k$-HVC, it constructs an extended hypergraph $k$-uniform hypergraph $H'$
by letting $V(H') = V(H) \times [B]$ and replacing each hyperedge $(v_1, \dots, v_k)$ of $H$ by $C$ hyperedges of the form $((v_1, i_1), \dots, (v_k, i_k))$ for randomly chosen $i_1, \dots, i_k \in [B]$ for some parameters $B$ and $C$.
The final $G$ for $T$-Transversal is just obtained by letting $V(G) = V(H')$ and each replacing a hyperedge $e = (v_1, \dots, v_k) \in E(H')$ by edges of $T$ between $v_1, \dots, v_k$. Then one can show the optimal $T$-Transversal for $G$ is essentially the same as the optimal vertex cover for $H'$, which is closely related to the optimal vertex cover for $H$. The proof crucially uses the $2$-vertex-connectivity of $T$. 

The key difference in this paper is how we construct $G$ from $H'$. Instead of directly adding a copy of $T$ for each hyperedge of $H'$, we let $G$ be the {\em vertex-hyperedge incidence graph}; $V(G) = V(H') \cup E(H')$ and for $v \in V(H')$ and $e \in E(H')$, the pair $(v, e)$ is an edge in $G$ if and only if $v \in e$. Then $G$ becomes a bipartite graph where the vertices in one side $E(H')$ has degree exactly $k$. 

Suppose $k = \Delta = \Delta(T)$ and $S \subseteq V(H')$ that covers every $e \in E(H')$. Then, in $G \setminus S$, every vertex $e \in E(H')$ has degree at most $\Delta - 1$, which implies that $G \setminus S$ does not contain any copy of $T$; 
when $\chi : V(T) \to \{ 0, 1 \}$ is a $2$-coloring of $T$, since $G \setminus S$ is still bipartite, any injective homomorphism from $T$ to $G \setminus S$ will map the vertices of $T$ of the same color to one side of the bipartition of $G$, but since each color has a vertex with degree at least $\Delta$, the fact that one side of $G \setminus S$ does not contain any vertex of degree at least $\Delta$ implies that such an injective homomorphism cannot exist!

To prove the other direction (i.e., a good $T$-transversal of $G$ implies a good vertex cover of $H$), we use the same technique of carefully constructing $H'$ from $H$ by letting $V(H') = V(H) \times [B]$ and replacing each hyperedge $(v_1, \dots, v_k)$ of $H$ by many hyperedges of the form $((v_1, i_1), \dots, (v_k, i_k))$. Unlike~\cite{guruswami2017inapproximability}, we do not have to use randomness here and simply create every possible hyperedge. The final construction becomes slightly more complicated because we create $C$ copies of the same hyperedge for technical purposes.

\section{Proof of Theorem~\ref{thm:main}}
Fix a tree $T$ with $\Delta := \Delta(T) \geq 3$.
$\Delta$-Uniform-Hypergraph Vertex Cover is the problem whose input is a $\Delta$-uniform hypergraph $H = (V(H), E(H))$ where every hyperedge $e \in E(H)$ contains exactly $\Delta$ vertices, and the goal is to find the smallest vertex cover $S \subseteq V(H)$. A subset $S$ is called a vertex cover if {\em covers} every hyperedge; i.e., every $e \in E(H)$ satisfies $e \cap S \neq \emptyset$. 
Our starting point is the following hardness for $\Delta$-Uniform-Hypergraph Vertex Cover~\cite{dinur2005new}. 

\begin{theorem} [\cite{dinur2005new}]
For any $\Delta \geq 3$ and $\eps' > 0$, given a $\Delta$-uniform hypergraph $H = (V(H), E(H))$, it is NP-hard to distinguish the following two cases:
\begin{itemize}
    \item Completeness: There exists a vertex cover $S \subseteq V(H)$ with $|S| \leq |V(H)| / (\Delta - 1 - \eps')$. 
    \item Soundness: For every vertex cover $S \subseteq V(H)$,  $|S| \geq (1 - \eps') |V(H)|$. 
\end{itemize}
\label{thm:dgkr}
\end{theorem}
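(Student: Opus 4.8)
\emph{Remark (proof plan).} Theorem~\ref{thm:dgkr} is the hardness result of~\cite{dinur2005new}; the plan is to reproduce its reduction, which goes from a hard \emph{multilayered} PCP to $\Delta$-Uniform-Hypergraph Vertex Cover through a biased long-code gadget. I would start from Gap-3SAT, which is NP-hard by the PCP theorem, and apply Raz's parallel repetition to obtain a two-prover one-round (label cover) instance with a constant-size label set and soundness error $\gamma$, where $\gamma = \gamma(\Delta, \eps')$ is an arbitrarily small constant. The combinatorial heart of~\cite{dinur2005new} is then to ``layer'' this instance into a multilayered PCP: a label cover with layers $V_1, \dots, V_L$ (for $L = L(\Delta, \eps')$), a projection constraint system $\Phi_{ij}$ between $V_i$ and $V_j$ for each $i < j$, and three properties: \emph{completeness}, that in the YES case a single labeling satisfies all $\Phi_{ij}$; \emph{soundness}, that in the NO case no labeling satisfies a $\gamma$-fraction of $\Phi_{ij}$ for any $i<j$; and \emph{weak density}, that for every $\delta > 0$ there is $L(\delta)$ such that whenever $L \ge L(\delta)$, any subsets $S_\ell \subseteq V_\ell$ with $|S_\ell| \ge \delta |V_\ell|$ admit two layers $i<j$ joined by a $\delta'$-fraction ($\delta' = \delta'(\delta) > 0$) of constraints of $\Phi_{ij}$ with both endpoints in $S_i$ and $S_j$. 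Constructing this object and proving weak density (via a layered parallel-repetition argument) is the main obstacle; I would cite~\cite{dinur2005new} for it.

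\emph{The hypergraph and completeness.} Set the bias $p := 1 - \frac{1}{\Delta - 1 - \eps'}$, so that $0 < p < \frac{\Delta-2}{\Delta-1}$. The vertex set of $H$ is $\bigcup_{\ell} \bigcup_{u \in V_\ell} \mathcal{C}[u]$, where each long-code block $\mathcal{C}[u]$ is a copy of $\{0,1\}^{[R_\ell]}$ ($R_\ell$ the number of labels in layer $\ell$) and each point $x \in \mathcal{C}[u]$ carries the $\mu_p$-weight $p^{|x|}(1-p)^{R_\ell - |x|}$; these rational weights are turned into vertex multiplicities so that $H$ is an ordinary $\Delta$-uniform hypergraph, at an error absorbed into $\eps'$. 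Hyperedges are generated by the long-code test: pick $i<j$ and a constraint of $\Phi_{ij}$ between $u \in V_i$ and $w \in V_j$ with projection $\pi \colon [R_j] \to [R_i]$, sample $x \in \mathcal{C}[u]$ and $y^{(1)}, \dots, y^{(\Delta-1)} \in \mathcal{C}[w]$ from the $\mu_p$-product conditioned on the event that for every $b \in [R_j]$ the $\Delta$ bits $(x_{\pi(b)}, y^{(1)}_b, \dots, y^{(\Delta-1)}_b)$ are not all $1$, and output the hyperedge $\{(u,x), (w, y^{(1)}), \dots, (w, y^{(\Delta-1)})\}$, which has exactly $\Delta$ vertices. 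For completeness, given a labeling $\sigma$ satisfying all $\Phi_{ij}$, let $I := \{(u,x) : x_{\sigma(u)} = 1\}$; for any hyperedge, taking $b = \sigma(w)$ (so $\pi(b) = \sigma(u)$) the conditioning forbids $(x_{\sigma(u)}, y^{(1)}_{\sigma(w)}, \dots, y^{(\Delta-1)}_{\sigma(w)}) = 1^\Delta$, so some vertex of the hyperedge lies outside $I$. Hence $I$ is independent, $V(H) \setminus I$ is a vertex cover of $\mu_p$-measure $1 - p = \frac{1}{\Delta - 1 - \eps'}$, and $|V(H) \setminus I| \le |V(H)|/(\Delta - 1 - \eps')$ after discretization.

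\emph{Soundness.} Suppose some independent set $I$ has measure $> \eps'$. By averaging, a $\delta$-fraction of blocks $\mathcal{C}[u]$ have local $I$-measure $> \eps'/2$ for a constant $\delta = \delta(\Delta, \eps')$. For such a block, since the test distribution (the $\mu_p$-product conditioned on avoiding $1^\Delta$, with $p < \frac{\Delta-2}{\Delta-1}$) is connected, the biased Fourier / influence-decoding machinery in the style of the $\mathrm{Ek}$-vertex-cover analyses of~\cite{dinur2005new} (building on Dinur--Safra) shows that the indicator of the local part of $I$ has between $1$ and $O_{\Delta, \eps'}(1)$ coordinates of influence at least $\tau = \tau(\Delta, \eps')$; collect them into a constant-size list $\mathrm{Dec}(u) \subseteq [R_\ell]$. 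With $S_\ell := \{u \in V_\ell : \mathrm{Dec}(u) \ne \emptyset\}$ we get $|S_\ell| \ge \delta |V_\ell|$ up to lower-order terms, so weak density yields two layers $i<j$ joined by many constraints with endpoints in $S_i, S_j$; a standard consistency argument---an influential coordinate $b$ of a $y^{(t)}$-block forces $\pi(b)$ to be influential for the $x$-block, because the test links them through the bit at $(\pi(b); b)$---then shows that the random labeling with $\sigma(u) \in \mathrm{Dec}(u)$ satisfies a constant fraction of $\Phi_{ij}$, exceeding $\gamma$ when $\gamma$ was chosen small. This contradicts soundness of the multilayered PCP, so every independent set has measure $\le \eps'$, i.e., every vertex cover of $H$ has size $\ge (1 - \eps')|V(H)|$.

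\emph{Main obstacle.} The two genuinely difficult ingredients are (a) building the multilayered PCP and proving the weak density property by a layered parallel-repetition argument, and (b) the biased long-code soundness analysis that extracts the influential coordinates in the right range of $p$; everything else---completeness, discretizing the $\mu_p$ weights, and the consistency bookkeeping---is routine, and I would not grind through it.
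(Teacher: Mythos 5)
The paper offers no proof of Theorem~\ref{thm:dgkr} at all---it is imported directly from~\cite{dinur2005new}---and your outline is a faithful sketch of that cited argument: the $\mu_p$-biased long code (with $p<\tfrac{\Delta-2}{\Delta-1}$) over a weakly dense multilayered PCP, the ``not-all-ones'' hyperedge test spanning one $x$-block and $\Delta-1$ $y$-blocks, completeness via the dictator independent set, and soundness via junta/influence decoding plus weak density, with the two genuinely hard ingredients correctly identified and deferred to~\cite{dinur2005new}. So your proposal is consistent with the paper's treatment (a citation) and adds an accurate summary of the cited proof; I see no gap to report.
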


We design a reduction from $\Delta$-Uniform-Hypergraph Vertex Cover to $T$-Transversal. Let $H = (V(H), E(H))$ be an instance of $\Delta$-Uniform-Hypergraph Vertex Cover. Let $B$ and $C$ be positive integers that will be fixed later. Given $H$, the reduction outputs a graph $G = (V(G), E(G))$ as an instance of $T$-Transversal as follows. 
\begin{itemize}
    \item We first construct an extended $\Delta$-uniform hypergraph $H' = (V(H'), E(H'))$, where we replace each vertex of $H$ by a {\em cloud} of $B$ vertices and replace each hyperedge $(v_1, \dots, v_\Delta)$ of $H$ by $B^\Delta$ hyperedges of the form $((v_1, i_1), \dots, (v_\Delta, i_\Delta))$ for $i_1, \dots, i_\Delta \in [B]$, 
    and further duplicate each hyperedge $C$ times; the final hyperedges are of the form 
    $((v_1, i_1), \dots, (v_\Delta, i_\Delta))_j$ where $i_1, \dots, i_\Delta \in [B]$ and $j \in [C]$. Formally, 
    \begin{itemize}
        \item $V(H') = V(H) \times [B]$.
        \item $E(H') = \{ ((v_1, i_1), \dots, (v_\Delta, i_\Delta))_j : (v_1, \dots, v_k) \in E(H)\mbox{ and } i_1, \dots, i_\Delta \in [B], j \in [C] \}$. Note that $|E(H')| = |E(H)|\cdot B^{\Delta}\cdot C$.
    \end{itemize}
    
    \item Let $G = (V(G), E(G))$ be the {\em vertex-hyperedge incidence graph} of $H'$. Formally,
    \begin{itemize}
        \item $V(G) = V(H') \cup E(H')$.
        \item $E(G) = \{ (v, e) : v \in V(H'), e \in E(H')\mbox{ and }v \in e \}$. 
    \end{itemize}
\end{itemize}
Note that $G$ is a bipartite graph. 

\paragraph{Completeness.}
Suppose that there exists $S \subseteq V(H)$ such that $|S| \leq |V(H)|/(\Delta - 1 - \eps')$ and it covers every hyperedge of $H$; i.e., for every $e \in E(H)$, $S \cap e \neq \emptyset$. Let $S' = S \times [B] \subseteq V(H')$ such that $|S'| \leq B|V(H)|/(\Delta - 1 - \eps')$. It is simple to verify that $S'$ covers every hyperedge of $H'$ as well; for every hyperedge $e' = ((v_1, i_1), \dots, (v_\Delta, i_\Delta))_{\ell}$ of $H'$, $(v_1, \dots, v_\Delta)$ is a hyperedge of $H$, which implies that $S$ contains some $v_j$ for $j \in [\Delta]$ and $S'$ contains $(v_j, i_j)$. 

We would like to prove that $S'$, as a subset of $V(G)$, is a valid $T$-transversal; it covers every copy of $T$ in $G$. This follows from the fact that after deleting $S'$ from $G$, every vertex $e \in E(H')$ has a degree at most $\Delta - 1$ in $G \setminus S'$; it has degree exactly $\Delta$ in $G$, but since $S'$ is a vertex cover for $H'$, there exists $v \in S'$ such that $(e, v) \in E(G)$.
Then $G \setminus S'$ is a bipartite graph where the maximum degree on one side is at most $\Delta - 1$. Since $T$ is a bipartite graph where both sides have a vertex of degree at least $\Delta$, $T$ cannot be a subgraph of $G \setminus S'$.

\paragraph{Soundness.}
Suppose that for every vertex cover $S \subseteq V(H)$,  $|S| \geq (1 - \eps') |V(H)|$. Let $R \subseteq V(G)$ be an optimal $T$-transversal of $G$. Our choice of $B$ and $C$ will satisfy
\begin{equation}
C > 2|V(H')| = 2|V(H)| \cdot B. 
\label{eq:cond1}
\end{equation}
Given this choice, we can prove that $R$ only contains vertices from $V(H')$, not $E(H')$. 
\begin{claim}
$R \subseteq V(H')$. 
\end{claim}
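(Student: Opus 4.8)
The goal is to show that an optimal $T$-transversal $R$ of $G$ never needs to delete a vertex corresponding to a hyperedge of $H'$. The plan is to argue by an exchange argument: if $R$ contains some $e \in E(H')$, I will show that replacing $e$ by a single well-chosen vertex of $V(H')$ yields another $T$-transversal that is no larger, and iterating this produces an optimal transversal inside $V(H')$. Concretely, suppose $e = ((v_1,i_1),\dots,(v_\Delta,i_\Delta))_j \in R$. Since $e$ is one of $C$ identical duplicates $((v_1,i_1),\dots,(v_\Delta,i_\Delta))_1,\dots,((v_1,i_1),\dots,(v_\Delta,i_\Delta))_C$, and $|R| \le |V(G)|$ is finite while $C > 2|V(H')| \ge |R|$, at least one duplicate $e^*$ of $e$ is \emph{not} in $R$ — in fact more than $|V(H')|$ of them survive.

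The key structural observation is that the surviving duplicates $e^*$ all have the same neighborhood $\{(v_1,i_1),\dots,(v_\Delta,i_\Delta)\}$ in $G$, and these are vertices of $V(H')$. I want to use this to show that deleting $e$ was pointless: any copy of $T$ in $G \setminus (R \setminus \{e\})$ that uses $e$ can be "rerouted." The cleanest route is to show that $R' := (R \setminus \{e\}) \cup \{(v_1,i_1)\}$ is still a $T$-transversal. Indeed, take any copy of $T$ in $G \setminus R'$; it avoids $(v_1,i_1)$ and all of $R \setminus \{e\}$, so the only way it is not killed by $R$ is that it passes through $e$. But $e$ has degree $\Delta$ in $G$ with neighbors $(v_1,i_1),\dots,(v_\Delta,i_\Delta)$; since $(v_1,i_1) \notin R'$ is excluded, the copy of $T$ uses $e$ with at most $\Delta - 1$ of its neighbors, so $e$ plays the role of a vertex of degree $\le \Delta - 1$ in $T$. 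Now here is where I need to be careful and where the main obstacle lies: I must produce a genuinely \emph{different} embedding of $T$ avoiding $e$, which is not automatic just from degree counting. The fix is to use the surviving duplicates: because more than $|V(H')| \ge \Delta$ duplicates $e^*$ of $e$ survive in $G \setminus R$ and each has exactly the neighborhood $\{(v_1,i_1),\dots,(v_\Delta,i_\Delta)\}$, the graph $G \setminus (R \setminus \{e\})$ already contains $T$ \emph{without} using $e$ (reroute through any surviving duplicate), contradicting that $R$ is a transversal — hence in fact $e \notin R$ for \emph{any} optimal (indeed any minimal) $R$, giving $R \subseteq V(H')$ directly.

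So the cleanest argument avoids the exchange step entirely: assume for contradiction $e \in R$ where $R$ is optimal, hence minimal (no proper subset is a transversal). Then $G \setminus (R \setminus \{e\})$ contains a copy $\phi$ of $T$, and every such copy must use the vertex $e$ (otherwise $R \setminus \{e\}$ would already be a transversal, contradicting minimality). Fix one such $\phi$; let $a \in V(T)$ be the vertex with $\phi(a) = e$. Since the surviving duplicates of $e$ number more than $|V(H')| \ge |V(T)|$, pick one, say $e^*$, that is not in the image $\phi(V(T))$; since $e^*$ has the same neighborhood as $e$ in $G$ and $e^* \notin R$, redefining $\phi(a) := e^*$ gives an injective homomorphism of $T$ into $G \setminus R$, contradicting that $R$ is a $T$-transversal. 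Therefore no optimal $R$ contains any vertex of $E(H')$, i.e., $R \subseteq V(H')$. The only point needing genuine care is verifying the new map is still injective (handled by choosing $e^*$ outside the image) and still a homomorphism (immediate, since $N_G(e^*) = N_G(e)$), and confirming $C > 2|V(H')|$ is comfortably enough to guarantee both the survival of many duplicates and one outside the image of $T$.
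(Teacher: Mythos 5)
Your proposal is correct and uses the same key mechanism as the paper: since each hyperedge of $H'$ appears in $C > 2|V(H')|$ identical copies with the same neighborhood, and an optimal $R$ has $|R| \le |V(H')|$, any copy of $T$ meeting $R$ only in hyperedge-vertices can be rerouted through surviving duplicates, contradicting that $R$ is a transversal. The only cosmetic difference is that you swap a single vertex using minimality of the optimal $R$, while the paper swaps all hyperedge-vertices of a putative copy at once to conclude that $R \cap V(H')$ is itself a $T$-transversal and then invokes optimality.
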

\begin{proof}
If there exists $((v_1, i_1), \dots, (v_\Delta, i_\Delta)) \in (V(H) \times [B])^{\Delta}$ such that 
$|R \cap \{ ((v_1, i_1), \dots, (v_\Delta, i_\Delta))_j : j \in [C] \}| > |V(H')|$, it violates the optimality of $R$; just taking all vertices in $V(H')$ is a cheaper $T$-transversal.
Therefore, we assume that for any $((v_1, i_1), \dots, (v_\Delta, i_\Delta))$, we have 
$|R \cap \{ ((v_1, i_1), \dots, (v_\Delta, i_\Delta))_j : j \in [C] \}| \leq |V(H')| < C / 2$. 

We now claim that $R \cap V(H')$ is a $T$-transversal. Consider any set of $|V(T)|$ vertices $I = \{ v_1, \dots, v_{p} \} \cup \{ (e_1)_{j_1}, \dots, (e_q)_{j_q} \}$ of $G$ whose induced subgraph $G_I$ contains $T$ as a subgraph, where $\{ v_1, \dots, v_p \} \subseteq V(H')$ and 
$\{ (e_1)_{j_1}, \dots, (e_q)_{j_q} \} \subseteq E(H')$ (e.g., for each $\ell \in [q]$, $e_{\ell} \in (V(H) \times [B])^{\Delta}$ and $j_{\ell} \in [C]$). For each $e_{\ell}$, among $C$ identical copies from $\{(e_{\ell})_{j_{\ell}}\}_{j_{\ell} \in [C]}$, $R$ contains at less than $C/2$ copies. Therefore, one can find $j'_1, \dots, j'_q$ such that none of $(e_1)_{j'_1}, \dots, (e_q)_{j'_q}$ is contained in $R$. For every $\ell \in [q]$, $(e_{\ell})_{j_{\ell}}$ and $(e_{\ell})_{j'_{\ell}}$ have the exactly the same of neighbors, so one can conclude that $I' = \{ v_1, \dots, v_{p} \} \cup \{ (e_1)_{j'_1}, \dots, (e_q)_{j'_q} \}$ also contains $T$ in its induced subgraph. However, by construction 
$R$ contains none of $(e_1)_{j'_1}, \dots, (e_q)_{j'_q}$, which means that $R$ contains at least one vertex from $\{ v_1, \dots, v_{p} \}$. This implies that $R \cap V(H')$ also contains at least one vertex from $\{ v_1, \dots, v_{p} \}$, which implies that $R \cap V(H')$ is a $T$-transversal.

By optimailty of $R$, we have $R \cap V(H') = R$, which implies that $R \subseteq V(H')$. 
\end{proof}

Let $k = |V(T)|$ and $w$ be a constant that will be fixed later only depending on $k$, and for each $v \in V(H)$, say $v$ is {\em occupied} if $|R \cap (\{ v \} \times [B])| \geq B - w$, and {\em free} otherwise. 
For $e = (v_1, \dots, v_{\Delta}) \in V(H)$, call $e$ {\em free} if all $v_1, \dots, v_{\Delta}$ are free. 

\begin{claim}
No $e \in V(H)$ is free. 
\end{claim}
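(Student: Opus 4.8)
The plan is to argue by contradiction: assuming some hyperedge $e = (v_1, \dots, v_\Delta) \in E(H)$ is free, I will build a copy of $T$ inside $G \setminus R$, contradicting that $R$ is a $T$-transversal. Since $e \in E(H)$, the hyperedges of $H'$ supported on the clouds of $v_1, \dots, v_\Delta$ actually exist (there are $B^\Delta C$ of them). Since $v_1, \dots, v_\Delta$ are all free, for each $\ell$ the set $F_\ell := \{\, i \in [B] : (v_\ell, i) \notin R \,\}$ has size $|F_\ell| > w$. By the previous claim $R \subseteq V(H')$, so no vertex of $E(H')$ lies in $R$; hence any subgraph of $G$ built only from hyperedge-vertices of $H'$ together with cloud vertices $(v_\ell, i)$ with $i \in F_\ell$ automatically avoids $R$. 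So it suffices to embed $T$ into $G$ using only such vertices.

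To set up the embedding, fix the proper $2$-coloring $\chi$ of $T$ and, after possibly swapping the two colors, let $A_0 := \chi^{-1}(0)$ be the color class attaining the minimum in the definition of $\Delta(T)$, so that $\deg_T(u) \leq \Delta$ for every $u \in A_0$; set $A_1 := \chi^{-1}(1)$. I will construct an injective homomorphism $\phi : V(T) \to V(G)$ that maps $A_0$ into $E(H')$ and $A_1$ into $\bigcup_{\ell} (\{v_\ell\} \times F_\ell)$. This respects the bipartition of $G$ (with sides $V(H')$ and $E(H')$), and one is essentially forced to put the bounded-degree side $A_0$ on $E(H')$, since every vertex of $E(H')$ has degree exactly $\Delta$ in $G$ and a $T$-vertex landing there must have $T$-degree at most $\Delta$. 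Regard the clouds $\{v_1\}\times[B], \dots, \{v_\Delta\}\times[B]$ as $\Delta$ \emph{parts}; a hyperedge of $H'$ supported on $v_1, \dots, v_\Delta$ is precisely a choice of one vertex from each part together with a copy index in $[C]$, and in $G$ it is adjacent to exactly that one vertex in each part.

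I build $\phi$ greedily along a DFS of $T$ from an arbitrary root $r \in A_1$, placing $\phi(r)$ at an arbitrary vertex of $\{v_1\}\times F_1$. When the DFS first reaches a vertex $u \in A_0$, its parent $a \in A_1$ is already embedded, say into part $\ell_0$; I take $\phi(u)$ to be a hyperedge of $H'$ that uses $\phi(a)$ in slot $\ell_0$, uses in some of the other $\Delta - 1$ slots a collection of \emph{fresh} cloud vertices of the relevant $F_\ell$ (vertices not yet in the image of $\phi$ nor reserved), assigning the children of $u$ to distinct such parts --- possible since $u$ has at most $\Delta - 1$ children --- fills the remaining slots by arbitrary vertices of their parts, and takes a copy index not yet used by any placed hyperedge. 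The fresh vertices chosen here are simultaneously declared to be the images of the corresponding children of $u$, so when the DFS later descends to a vertex $a' \in A_1$ its image $\phi(a')$ is whatever its parent reserved for it.

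Finally I check feasibility, and this bookkeeping is the main obstacle. At most $k = |V(T)|$ vertices of $A_1$ are placed, one cloud vertex each, so each part is asked for at most $k$ distinct cloud vertices in total; choosing $w := k$ makes a fresh vertex always available since $|F_\ell| > w$. At most $k$ hyperedge-vertices are used, each with a distinct copy index, which is fine as $C$ is taken much larger than $k$. The part-assignment rule ensures the $T$-neighbors of every $u \in A_0$ (its parent and its children) go to distinct parts --- exactly what is needed for all of them to lie in the single hyperedge $\phi(u)$ --- so $\phi$ is a homomorphism; no adjacency between same-colored $T$-vertices is ever required since we only need $T$ as a subgraph, not an induced subgraph. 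Injectivity holds because the images of $A_1$ are pairwise-distinct cloud vertices and the images of $A_0$ carry pairwise-distinct copy indices, and the two sides are disjoint. Thus $\phi(T)$ is a copy of $T$ in $G \setminus R$, a contradiction. The delicate points are recognizing that one must send the bounded-degree class $A_0$ to $E(H')$ (this is where $\Delta(T)$, not $|V(T)|$, enters) and organizing the DFS so that the neighbors of each $A_0$-vertex fall into distinct parts while $\phi$ stays injective; everything else is routine.
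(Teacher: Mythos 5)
Your proof is correct and rests on the same idea as the paper's: within the clouds of a free hyperedge, each of which retains more than $w$ undeleted vertices, you build a copy of $T$ avoiding $R$ by sending the color class of $T$ whose degrees are bounded by $\Delta$ to hyperedge-vertices of $H'$ (which have degree exactly $\Delta$ in $G$) and the other class to undeleted cloud vertices, using the earlier claim that $R \subseteq V(H')$. The only difference is in the implementation: the paper first grows an intermediate tree $T'$ of height $2k-1$ with degree $\Delta$ on even levels and $k$ on odd levels and then embeds $T$ into it (hence its generous requirement $w > k^{3k}$), while your DFS embeds $T$ directly and shows that $w \geq k$ already suffices, which is compatible with (indeed slightly improves) the parameter choices in the rest of the argument.
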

\begin{proof}
Assume towards contradiction that $e = (v_1, \dots, v_{\Delta}) \in V(H)$ is free; all $v_1, \dots, v_{\Delta}$ are free. We will show that $R$ is not a $T$-transversal.

Without loss of generality, after suitable permutations of vertices, assume that for each $\ell \in [\Delta]$, none of $(v_{\ell}, 1), \dots, (v_{\ell}, w)$ is in $R$. We will find a large tree $T'$ in the subgraph of $G$ induced by $V' \cup E'$ where 
$V' = (\cup_{\ell \in [\Delta]} (\{ v_{\ell} \} \times [w]))$ and $E' = \{ ((v_1, i_1), \dots, (v_{\Delta}, i_{\Delta}))_1 : i_1, \dots, i_{\Delta} \in [w] \}$. The tree $T'$ has height is $2k - 1$ and it has $2k$ levels from $0$ to $2k - 1$. Each even level contains a node from $E'$ and each odd level contains a node from $V'$; furthermore, each odd-level node has {\em type $\ell$} when it contains a node from $(v_{\ell} \times [w])$. Each even-level node of $T'$ will have degree $\Delta$ and each odd-level internal node (i.e., at level $1, 3, \dots, 2k - 3$) of $T'$ will have degree $k$. Let the root node be $((v_1, 1), \dots, (v_{\Delta}, 1))_1$ and its $\Delta$ children be $(v_1, 1), \dots, (v_{\Delta}, 1)$. 
The rest of $T'$ is constructed by the following procedure run for each odd-level node. 

\begin{itemize}
    \item For each odd-level node $(v_{\ell}, i_{\ell})$ of type $\ell$: 
    \item If the current level is already $2k - 1$, return.
    \item Otherwise, for each $\ell' \in [\Delta] \setminus {\ell}$, choose $k - 1$ new vertices from $\{ v_{\ell'} \} \times [w]$ that have not been chosen during the construction of $T'$. Call them $(v_{\ell'}, i'_{\ell', 1}), \dots, (v_{\ell'}, i'_{\ell', k - 1})$. 
    \begin{itemize}
        \item Since $T'$ has at most $(k\Delta)^{k}$ internal nodes, by ensuring 
        \begin{equation}
            w > k^{3k} \geq (k\Delta)^{k + 1},
            \label{eq:cond2}
        \end{equation}
        one can ensure that this process can be done for every odd-level internal node. 
    \end{itemize}
    \item For each $r = 1, \dots, k - 1$,
    \begin{itemize}
        \item Create a (even-level) child $((v_1, i'_{1,r}), \dots, (v_{\ell-1}, i'_{\ell - 1,r}), (v_{\ell}, i_{\ell}),
        (v_{\ell+1}, i'_{\ell + 1,r}),
        \dots, (v_\Delta, i'_{\Delta,r}))_1$.
        \begin{itemize}
            \item Its $\Delta - 1$ (odd-level) children will be $(v_1, i'_{1,r}), \dots, (v_{\ell-1}, i'_{\ell - 1,r}), 
            (v_{\ell+1}, i'_{\ell + 1,r}), \dots, (v_\Delta, i'_{\Delta,r})$.
        \end{itemize}
    \end{itemize}
\end{itemize}

Therefore, one can conclude that a desired $T'$ can be found from $V' \cup E'$. 
Since $T'$ has height $2k - 1$ and every even-level node has degree exactly $\Delta$ and every odd-level internal node has degree exactly $k$, we claim that $T'$ contains a copy of $T$. If $\chi : V(T) \to \{ 0, 1 \}$ is a $2$-coloring of $T$ such that $\max_{V \in \chi^{-1}(0)} \deg_T(v) = \Delta$, mapping any fixed node $v \in \chi^{-1}(0)$ to the root of $T'$ and arbitrarily extending the mapping along the edges of $T$ will give a injective homomorphism from $T$ to $T'$; every node in $\chi^{-1}(0)$ will be mapped to even-level nodes of $T'$ and every node in $\chi^{-1}(1)$ will be mapped to odd-level nodes of $T'$, both of which have enough degrees (i.e., $\Delta$ for even levels, $k$ for odd levels) for further extension. 

Finally, note that since $R \cap (V' \cup E') = \emptyset$, $R$ does not intersect $T'$. Since $T'$ contains a copy of $T$, it contradicts that $R'$ is a $T$-transversal and finishes the proof. 
\end{proof}

Since no $e \in V(H)$ is free, it implies that the set of occupied verties is a valid hypergraph transversal in $H$, which implies that $|R| \geq (1 - \eps')|V(H)|(B - w)$. By setting $w$ be a constant greater than $k^{3k}$, $B = \omega(w)$, and $C > 2|V(H)|B$ satisfies all the previous conditions (\eqref{eq:cond1} and~\eqref{eq:cond2}) while ensuring that $|R| \geq (1 - \eps' - o(1))|V(H)|B$. The multiplicative gap between the sizes of the optimal $T$-transversal between the completeness case and the soundness case is at least $(\Delta - 1)(1 - O(\eps') - o(1))$. 

\bibliographystyle{alpha}
\bibliography{reference}

\end{document}